%%%%%%%%%%%%%%%%%%%%%%%%%%%%%%%%%%
%
% The "Potential gain" paper
%
% arXiv from WI 2019
%
% by PDM et al.
%
%
% Important: press the 'Return' key every time you press fullstop.
% It will help, thank you.
%
% If fonts do not compile, try running 
% cmd in admin mode: initexmf --admin --mkmaps --verbose
% cmd in user mode: updmap
%
%%%%%%%%%%%%%%%%%%%%%%%%%%%%%%%%%%%
\documentclass[manuscript, nonacm=true]{acmart}

% MOD: -----------------------------------------------
% FOR COMMENTING
\usepackage{verbatim}

% our macros
\newcommand{\Gr}{\mathcal{G}}
\newcommand{\N}{\mathcal{N}}
\newcommand{\A}{\mathbf{A}}
\newcommand{\vv}{\mathbf{v}}
% ------------------------------------
\usepackage{amsmath}
\usepackage{graphicx}
\usepackage{xcolor}%\definecolor{blue}{grey}{1}
\definecolor{blue}{rgb}{0,0,0}
\long\def\COMMENT#1\ENDCOMMENT{\message{(Commented text...)}\par}
% ---
\usepackage{hyperref}
\usepackage{url}

%%
%% \BibTeX command to typeset BibTeX logo in the docs
\AtBeginDocument{%
  \providecommand\BibTeX{{%
    \normalfont B\kern-0.5em{\scshape i\kern-0.25em b}\kern-0.8em\TeX}}}

%% Rights management information.  This information is sent to you
%% when you complete the rights form.  These commands have SAMPLE
%% values in them; it is your responsibility as an author to replace
%% the commands and values with those provided to you when you
%% complete the rights form.

%% These commands are for a PROCEEDINGS abstract or paper.
\setcopyright{acmcopyright}
\copyrightyear{2019}
\acmYear{2019}
\acmConference[WI '19]{IEEE/WIC/ACM International Conference on Web Intelligence}{October 14--17, 2019}{Thessaloniki, Greece}
\acmBooktitle{IEEE/WIC/ACM International Conference on Web Intelligence (WI '19), October 14--17, 2019, Thessaloniki, Greece}
\acmPrice{15.00}
\acmDOI{10.1145/3350546.3352559}
\acmISBN{978-1-4503-6934-3/19/10}

%%
%% Submission ID.
%% Use this when submitting an article to a sponsored event. You'll
%% receive a unique submission ID from the organizers
%% of the event, and this ID should be used as the parameter to this command.
%%\acmSubmissionID{123-A56-BU3}

%%
%% The majority of ACM publications use numbered citations and
%% references.  The command \citestyle{authoryear} switches to the
%% "author year" style.
%%
%% If you are preparing content for an event
%% sponsored by ACM SIGGRAPH, you must use the "author year" style of
%% citations and references.
%% Uncommenting
%% the next command will enable that style.
%%\citestyle{acmauthoryear}

%%
%% end of the preamble, start of the body of the document source.
\begin{document}

%%
%% The "title" command has an optional parameter,
%% allowing the author to define a "short title" to be used in page headers.
\title{Potential gain as a centrality measure}

%%
%% The "author" command and its associated commands are used to define
%% the authors and their affiliations.
%% Of note is the shared affiliation of the first two authors, and the
%% "authornote" and "authornotemark" commands
%% used to denote shared contribution to the research.

\author{Pasquale De Meo}
\affiliation{%
  \institution{Dept. of Ancient and Modern Civilizations\\ University of Messina}
  \streetaddress{}
  \city{I-98168 Messina}
  \country{Italy}}
\email{pdemeo@unime.it}

\author{Mark Levene}
%\authornote{Both authors contributed equally to this research.}
\email{mark@dcs.bbk.ac.uk}
\orcid{}
\author{Alessandro Provetti}
%\authornotemark[1]
\email{ale@dcs.bbk.ac.uk}
\affiliation{%
  \institution{Dept. of Computer Science \\ Birkbeck, University of London}
  \streetaddress{}
  \city{London WC1E 7HX}
  \state{UK}
  \postcode{}
}

%%
%% By default, the full list of authors will be used in the page
%% headers. Often, this list is too long, and will overlap
%% other information printed in the page headers. This command allows
%% the author to define a more concise list
%% of authors' names for this purpose.
\renewcommand{\shortauthors}{De Meo, et al.}

%%
%% The abstract is a short summary of the work to be presented in the
%% article.
\begin{abstract}
Navigability is a distinctive features of graphs associated with artificial or natural systems whose primary goal is the transportation of information or goods. 
We say that a graph $\Gr$ is navigable when an agent is able to efficiently reach any target node in $\Gr$ by means of local routing decisions. 
In a social network navigability translates to the ability of reaching an individual through personal contacts.
Graph navigability is well-studied, but a fundamental question is still open: why are some individuals more likely than others to be reached via short, friend-of-a-friend, communication chains? 
In this article we answer the question above by proposing a novel centrality metric called the {\em potential gain,} which, in an informal sense, quantifies the easiness at which a target node can be reached. 
We define two variants of the potential gain, called the {\em geometric} and the {\em exponential potential gain,} and present fast algorithms to compute them. 
The geometric and the potential gain are the first instances of a novel class of {\em composite centrality metrics,} i.e., centrality metrics which combine the {\em popularity} of a node in $\Gr$ with its {\em similarity} to all other nodes. 
As shown in previous studies, popularity and similarity are two main criteria which regulate the way humans seek for information in large networks such as Wikipedia. 
We give a formal proof that the potential gain of a node is always equivalent to the product of its degree centrality (which captures popularity) and its Katz centrality (which captures similarity).
\end{abstract}

%%
%% The code below is generated by the tool at http://dl.acm.org/ccs.cfm.
%% Please copy and paste the code instead of the example below.
%%
\begin{CCSXML}
<ccs2012>
   <concept>
       <concept_id>10003752.10003809.10003635</concept_id>
       <concept_desc>Theory of computation~Graph algorithms analysis</concept_desc>
       <concept_significance>500</concept_significance>
       </concept>
 </ccs2012>
\end{CCSXML}

\ccsdesc[500]{Theory of computation~Graph algorithms analysis}
\ccsdesc[500]{Information systems~Web crawling}

%%
%% Keywords. The author(s) should pick words that accurately describe
%% the work being presented. Separate the keywords with commas.
\keywords{Graph Navigability, Node Ranking in Graphs, Centrality}

%%
%% This command processes the author and affiliation and title
%% information and builds the first part of the formatted document.
\maketitle

%\newpage

% --------------------------------------------------------------------------
%\IEEEraisesectionheading{\section{Introduction}\label{sec:introduction}}
\section{Introduction}
\label{sec:introduction}
Centrality metrics \cite{lu2016vital} provide a ubiquitous Network Science tool for the identification of the ``important'' nodes in a graph.
They have been widely applied in a range of domains such as early detection of epidemic outbreaks \cite{chung2009distributing}, viral marketing \cite{leskovec2007dynamics}, trust assessment in virtual communities \cite{AgresteMFPP15}, preventing catastrophic outage in power grids \cite{albert2004structural} and analyzing heterogeneous networks \cite{Agreste-anobii15}.\\
The notion of importance of a node can be defined in a number of ways \cite{newman2010networks,boldi2014axioms,DBLP:journals/netsci/BoldiLV17,DBLP:journals/netsci/Vigna16,DBLP:conf/www/Boldi15}. 
% MOD: I added Boldi et al.
Some centrality metrics define the importance of a node $i$ in a graph $\Gr$ as function of the distance of $i$ to other nodes in $\Gr$: for instance, in  {\em Degree Centrality}, the importance of $i$ is defined as the number of the nodes which are adjacent to $i$, i.e. which are at distance one from $i$. 
Analogously, {\em Closeness Centrality} \cite{newman2010networks} classifies as important those nodes which are few hops away from any other node in $\Gr$.\\
Another class of centrality metrics looks at walk/path structures in $\Gr$: a walk is a sequence of adjacent nodes;its length is defined as the number of edges it contains; a path is a walk without repeated edges and the shortest path connecting two nodes is also called its {\em geodesic path.}
For instance, the {\em Betweenness  Centrality} \cite{newman2010networks} of $i$ is the ratio of the number $g_{jl}(i)$ of geodesic paths from any node $j$ to any node $l$ which pass through the node $i$ to the number $g_{jl}$ of geodesic paths running from $j$ to $l$ and, thus, nodes with largest betweenness  centrality scores are those which intercept most of the geodesic paths in $\Gr$.

A further popular metric is {\em Katz Centrality Score} \cite{katz1953new}, which is understood as the weighted number of walks terminating in $i$: here, the weighting factor is inversely related to walk length and, thus, long (resp., short) walks have a small (resp., large) weight.\\
For a suitable choice of the weighting factor, the Katz centrality score converges to the {\em Eigenvector Centrality} \cite{benzi2014matrix,boldi2014axioms} or the popular {\em PageRank} \cite{brin1998anatomy,boldi2014axioms}.

To the best of our knowledge, however, there is no previous work in which the centrality of a node is closely related to the notion of {\em navigability}: roughly speaking, we say that $\Gr$ is navigable if it is possible to successfully route a message to any node $i$ in $\Gr$ via a short chain of intermediary nodes, regardless of the node $j$ which generates the message.\\
Navigability is one of the most important features for a broad range of natural and artificial systems which have the transportation of information (e.g. a computer network) or the trade of goods (e.g. a road network) as their primary purpose. In general, if the topology of the  graph $\Gr$ underlying the above mentioned systems would be perfectly specified, then any source node $i$ could discover all shortest paths starting from (or terminating in) $i$ and it could make use of the discovered paths to efficiently route messages.\\
In practice, nodes in $\Gr$ are often able to efficiently route messages even if they do not have a global view of the topology of $\Gr$, and this has encouraged many researchers to seek a better understanding of why graphs arising in real applications are navigable.
Early studies on graph navigability were inspired by the seminal work of Travers and Milgram \cite{travers1967small} on the ``small world'' property.\\
In a celebrated experiment, random-chosen Nebraska residents were asked to send a booklet to a complete stranger in Boston. Selected individuals were required to forward the booklet to any of their acquaintances whom they deemed likely to know the recipient or at least might know people who did. In some cases, the booklet actually reached the target recipient by means, on average, of 5.2 intermediate contacts, thus suggesting an intriguing feature of human societies: in large, even planetary-scale, social networks, pairs of individuals are connected through {\em shorts chains of intermediaries} and ordinary people are able to uncover these chains \cite{dodds2003experimental,kleinberg2000small,GoMuWa09,LeHo08}.

Several empirical studies have verified the small-world phenomenon in  diverse domains such as metabolic and biological networks \cite{jeong2000large}, the Web graph \cite{broder2000graph}, collaboration networks among scientists \cite{newman2001structure} as well as social networks \cite{dodds2003experimental,watts1998collective}.

So far, centrality metrics and navigability have been investigated in parallel, yet their research tracks are disconnected. 
Thus, an important (and still unanswered) direction of inquiry is the introduction of centrality metrics that are related to the {\em navigability} of a node, i.e., the ease at which it is possible to reach a target node $i$ regardless of the node $j$ chosen as source node.

In this article we tackle the questions above by extending previous work 
by Fenner {\em et al.} \cite{fenner2008modelling} to the realm of social networks. 
The main output of our research is an index, called the {\em potential gain}, which ranks nodes in a network on the basis of their ability to find a target.

The potential gain of a node $i$ depends on the number of walks $w_k(j,i)$ of length $k$ that connect $i$ with any other node $j.$ 
The underlying idea is that, for a fixed $k,$ the larger $w_k(j, i),$ the higher the chance that $j$ will reach $i,$ regardless of the specific navigation strategy.
In the computation of the potential gain, we take the small-world phenomenon as axiomatic: we consider an agent that starts from $j$ and it looks for short walks to reach $i$. 

We observe that the value a walk has for the agent will decreases with its length $k$ and there is a threshold length beyond which the agent has to abandon that walk.
To formalize the intuition above, we introduce a weighting factor $\phi(k)$ which monotonically decreases with $k$ to penalize long walks.

\noindent
We have developed two variants of the potential gain of \cite{fenner2008modelling}, namely:

\begin{itemize}
	\item the {\em geometric potential gain,} in which $\phi(k)$ decays as $\delta^k$, where $\delta$ is a parameter ranging between $0$ and the inverse of the spectral radius $\lambda_1$ of $\Gr$\footnote{The spectral radius of $\Gr$ is defined as the largest eigenvalue of the adjacency matrix of $\Gr$.}, and
	
	\item the {\em exponential potential gain,} in which $\phi(k)$ decays in exponential fashion.
\end{itemize} 

Both the geometric and exponential gain of $i$ can be thought as the product of one index (Degree Centrality) related to the {\em popularity} of $i$ and another (Katz Centrality score, for the geometric potential gain, and Communicability Index \cite{benzi2013total,estrada2005subgraph} for the exponential potential gain) which reflects the degree of {\em similarity} of $i$ with all other nodes in the network. 
In this sense, the geometric and the exponential potential gain are {\em composite centrality metrics}, i.e., they constitute a novel class of centrality metrics which combine popularity and similarity to rank nodes in graphs. The combination of popularity and similarity has proven to closely resemble the way humans navigate large social networks \cite{csimcsek2008navigating} or attempt to locate information in large information networks such as Wikipedia \cite{west2009wikispeedia,west2012automatic,helic2013models}.

Our formalisation applies the {\em Neuman series expansion} \cite{horn2013matrix} to efficiently but accurately approximate both the geometric and exponential gain.
Both theoretical and experimental analysis show that our approach is appropriate for accurately computing the geometric and exponential potential gain in large real-life graphs consisting of millions of nodes and edges, even with modest hardware resources.

We validated our approach on three large datasets: \textsc{Facebook} (a graph of friendships among Facebook users), \textsc{DBLP} (a graph describing scientific collaboration among researchers in Computer Science) and \textsc{YouTube} (a graph mapping friendship relationships among YouTube users). 
% MOD
The experimental results will be in the full version of this article.

\COMMENT
\noindent
The main findings of our study can be summarized as follows:

\begin{enumerate}
\item The amount of time needed to compute the geometric or the exponential potential gain {\em does not depend} on the number of nodes/edges of a graph; instead, it depends on the spectral radius $\lambda_1$: 
the larger $\lambda_1$, the better connected the graph and, thus, the larger the number of walks needed to get a good approximation of the geometric/exponential potential gain.

\item For small values of $\delta$, the geometric potential gain is highly correlated with Degree Centrality, while for large values of $\delta$ it displays a strong and positive correlation with Eigenvector Centrality.

\item In the case of the geometric potential gain, walks of small length (i.e., up to  around ten) are sufficient to obtain a good approximation. 
In contrast, to compute the exponential potential gain our algorithm needed to construct longer random walks, in some cases up to ten times longer than those required for the computation of the geometric potential gain.

\item As a consequence of the above point, the geometric potential gain seems to be the most efficient solution for large graphs. 
\end{enumerate} 

This article is organized as follows: in Section \ref{sec:background} we provide basic definitions that will be used throughout the paper. 
In Section \ref{sec:related-works} we review related work.
Section \ref{sec:network-navigability} introduces the geometric and exponential potential gain and illustrates their main properties.
In Section \ref{sec:methods} we discuss how to efficiently calculate the geometric and exponential potential gain, while Section \ref{sec:experiments} details the experiments we have performed. 
Finally, in Section \ref{sec:conc} we draw our conclusions.
\ENDCOMMENT

% -----------------------------------------------------------------------
\section{Background}
\label{sec:background}

In this section we introduce some basic terminology for graphs that will be largely used throughout this article.

Let a graph $\mathcal{G}$ be an ordered pair $\Gr = \langle N, E \rangle$ where \textit{N} is a set of {\em nodes}, here also called {\em vertices}, and $E = \{\langle i, j \rangle: i,j \in V\}$ is the set of {\em edges.} 
As usual, $\Gr$ is {\em undirected} if edges are unordered pairs of nodes and {\em directed} otherwise.
In this article we will consider only undirected graphs. 

Also, let $n = \vert V \vert$ be the number of nodes, $m = \vert E \vert$ the number of edges of $\Gr$. 
For any given node \textit{i} its neighborhood $\N(i)$ is the set of nodes directly connected to it; its {\em degree} $d_i$ is the number of edges incident onto it, i.e., $d_i = \vert \N(i) \vert$.

A {\em walk} of length $k$ (with  $k$ a non-negative integer) is a sequence of nodes $\langle i_0, i_1, \ldots, i_k\rangle$ such that consecutive nodes are directly connected: $\langle i_{\ell}, i_{\ell+1} \rangle\in E$ for $\ell \in [0..k-1].$ 
Also, we use the term {\em path} for walks that do not have repeated vertices. 
A walk will be {\em closed} if it starts and ends at the same node.

%A graph is {\em connected} if any pair of nodes are connected through a path: 
%in a connected graph, multiple paths may connect a node $i$ with a node $j$ and, to this end, we define the {\em distance} from $i$ to $j$ as the length of the shortest path (i.e., the path containing the fewest number of edges) going from $i$ to $j$.

We will represent graphs by their associated {\em adjacency matrix,} $\A,$ defined as usual with $a_{ij}=1$ if $\langle i,j\rangle \in E$ and 0 otherwise. 
Sometimes we may slightly simplify notation with $a_{ij} = \A_{ij}.$ 

The adjacency matrix provides a compact formalism to describe many graph properties: for instance, the matrix $\A^2$ where $a^2_{ij} = \sum_{k = 1}^{n} a_{ik}a_{kj}$, gives the number of walks of length two going from $i$ to $j$. 
Inductively, for any positive integer $m$, the matrix $\A^m$ will give the number of closed (resp., distinct) walks of length $m$ between any two nodes $i$ and $j$ if $i = j$ (resp., if $i \neq j$) \cite{cvetkovic1997eigenspaces}.

It is a well-know fact that the adjacency matrix of any undirected graph is {\em symmetric} and, hence, all its eigenvalues $\lambda_1 \geq \lambda_2 \geq \ldots \geq \lambda_n$ are real. 
The largest eigenvalue $\lambda_1$ of $\A$ is also called its {\em principal eigenvalue} or {\em spectral radius} of $\Gr$.
Moreover, the corresponding eigenvectors $\vv_1, \ldots, \vv_n$ will form an orthonormal basis in $\mathbb{R}^n$ \cite{strang1993introduction}.
Eigenpairs $\langle \lambda_i, \vv_i \rangle$ are formed by the eigenvalue $\lambda_i$ and the corresponding eigenvector $\mathbf{v}_i$.

%\input{related-works}

%% ---------------------------------------------------------
\section{A model of network navigability}
\label{sec:network-navigability}

In this section we introduce our new centrality metrics, called the geometric and exponential potential gain.\\
As we will see, they share a common physical interpretation which is based on the notion of {\em graph navigability}: roughly speaking, we say that a graph $\Gr$ is navigable if, for any target node $i$ in $\Gr$, it is possible to reach $i$ via short paths/walks, independently of the node $j$ (called {\em the source}) from which we choose to start exploring $\Gr$ from.  \\
In the light of previous research on graph navigability, we informally define the {\em navigability score} of a node $i$ as a measure of the ``easiness'' with which it is possible to reach $i$ independently of the source node $j$.
In this way, the navigability score of a node can be interpreted as a {\em centrality metric}.

To define the navigability score we borrow some ideas from previous work by Fenner {\em et al.} \cite{fenner2008modelling}, who formulated the problem of identifying a ``good'' page $p$ from which a user should start exploring the Web. 
A page $p$ is classified as a good starting point if it satisfies the following criteria: {\em (1) it is relevant}, i.e. the content of $p$ closely matches user’s information goals, {\em (2) the page $p$ is central}, i.e., the distance of $p$ to other Web pages in the Web graph is as low as possible and {\em (3) the page $p$  is connected}, in the sense that $p$ is able to reach a maximum number of other pages via its outlinks.\\
A key difference between the approach of Fenner {\em et al.} \cite{fenner2008modelling} and the current one is that they defined the navigability score for $i$ as the ability of $i$ of acting as the {\em source node} for reaching all the other nodes. 
In our setting, instead, we think of the node $i$ as the {\em target node} we wish to reach.\\
So, let us fix a source node $j$ and a target node $i$ and provide an estimate $\tau(j, i)$ of how ``easy'' it will be for $i$ to be reached if we choose $j$ as source node. 
Intuitively, the larger the number of walks from $j$ to $i$, the easier it is for $i$ to be reached from $j$; in addition, we assume that the task of exploring a graph is costly and such cost increases as the length of the walks/paths we use for exploration purposes increases. 
Therefore, shorter walks should be preferred to longer ones. \\ By combining the requirements above, we obtain:

\begin{equation}\label{eqn:tau-ceoff}
\tau(j, i) = \sum_{k=1}^{+\infty}\phi(k)\cdot w_k(j,i)
\end{equation}

\noindent 
here $w_k(j,i)$ is the number of walks of length $k$ going from $j$ to $i$ and the non-increasing function $\phi(k)$ acts as penalty for longer walks. 
If we sum over all possible source nodes $j$, we obtain a global centrality index $p(i)$ for $i$:

\begin{equation}\label{eqn:potential-as-sum}
p(i) = \sum_{j \in N} \tau(j,i).
\end{equation}

In analogy to Fenner {\em et al.} \cite{fenner2008modelling,levene2004navigating}, we will call $p(i)$  the {\em potential gain} of $i$. \\
Depending on the choice of the penalty function $\phi(\cdot)$ we obtain two variants of the potential gain, namely the geometric and the exponential potential gain (see Section \ref{sub:geometric-exponential-potantial-gain}).

%MOD:
\COMMENT
In Section \ref{sub:relation-to-centrality} we will compare the geometric and the exponential potential gain with other, well-known, centrality metrics from the literature.\\
In Section \ref{sub:relation-geometric-exponential} we investigate the relationship between the geometric and the exponential potential gain.
Finally, Section \ref{sub:fast-calculation-geom-expon} outlines our approach to calculating the geometric and exponential potential gain.
\ENDCOMMENT

% ---------------------------------------------------------------------
\subsection{The geometric and exponential potential gain}
\label{sub:geometric-exponential-potantial-gain}

Given the above specifications, we first define the potential gain in matrix notation.
For the base case, consider walks of length $k$=1, i.e., direct connections. 
Only the neighbours of a node $i$ will contribute to the potential gain of $i,$ which leads to the trivial conclusion that, at $k= 1$, nodes with the largest degree are also those ones with the largest potential gain. 

We define the vector $\mathbf{p}$ such that $\mathbf{p}_i = p(i)$ for every node $i$:

\begin{equation}
\label{eqn:potential-matrix}
\mathbf{p} = \phi(1)\cdot \mathbf{A} \times \mathbf{1}.
\end{equation}

If we include walks of length two, then we have to consider the squared adjacency matrix $\mathbf{A}^2$. 
So, we add a contribution $\phi(2)\cdot \mathbf{A}^2 \times \mathbf{1}$ to the potential gain.

By induction, nodes capable of reaching from $i$ through walks of length up to $k$ provide a contribution to the potential gain equal to $\phi(k)\cdot \mathbf{A}^k \times \mathbf{1}$. 
By summing over all possible values of $k$ we get to the following expression for $\mathbf{p}$:

\begin{align*}
\label{eqn:potential-gain-matrix}
\mathbf{p} &= \phi(1)\mathbf{A} \times \mathbf{1} +
\phi(2)\mathbf{A}^2 \times \mathbf{1} + \ldots + \phi(k)\mathbf{A}^k
\times \mathbf{1} + \ldots\\ 
& =\sum_{k=1}^{+\infty} \left(\phi(k)\mathbf{A}^k\times
\mathbf{1}\right) = 
\left(\sum_{k=1}^{+\infty}\phi(k)\mathbf{A}^k\right) \times \mathbf{1}
\end{align*}

To attenuate the effect of the walks' length, we will consider two weighting functions, namely:

\begin{enumerate}
\item {\em Geometric:} $\phi(k) = \delta^{k-1}$ with $\delta \in (0,1)$. 
So we define the {\em geometric potential gain,} $\mathbf{g}$:

\begin{equation}
\label{eqn:geometric-potential-gain}
\mathbf{g} = \left(\mathbf{A} + \delta \mathbf{A}^2 + \ldots + \delta^{k-1} \mathbf{A}^k + \ldots \right) \times\mathbf{1}
\end{equation} 

\item {\em Exponential:} $\phi(k) = \frac{1}{(k-1)!}$. 
So we define the {\em exponential potential gain,} $\mathbf{e}$:

\begin{equation}
\label{eqn:exponential-potential-gain}
\mathbf{e} = \left(\mathbf{A} + \mathbf{A}^2 + \ldots + \frac{1}{\left(k -1 \right) !} \mathbf{A}^k + \ldots\right) \times \mathbf{1}
\end{equation} 

\end{enumerate}

% -----------------------------------------------------------------------
% MOD 1:
% MOD: Promoted to section for WI19 proceedings
\section{Potential Gain as centrality}
%\subsection{Relation to centrality measures}
\label{sub:relation-to-centrality}

The geometric and the exponential potential gain introduced above yield a {\em ranking} of network nodes and, therefore, it is instructive to compare them with popular centrality metrics.
Recall that we defined the {\em spectral radius} $\lambda_1$ of $\mathbf{A}$ as the largest eigenvalue of $\mathbf{A}$.

As for the geometric potential gain, if we let $\delta < \lambda_1^{-1}$, the following expansion holds:

\begin{align*}
\mathbf{g} &= \left(\mathbf{A} + \delta \mathbf{A}^2 + \ldots + \delta^{k-1} \mathbf{A}^k + \ldots \right) \times\mathbf{1}  \\
&=\mathbf{A} \times \left(\mathbf{I} + \delta \mathbf{A} + \ldots + \delta^{k-1} \mathbf{A}^{k-1} + \ldots \right) \times\mathbf{1} \\
&=\mathbf{A} \times \left(\mathbf{I} - \delta \mathbf{A}\right)^{-1} \times\mathbf{1}
\label{eqn:geometric-potential-series}
\end{align*}

\noindent 
in which we make use of the {\em Neuman series} \cite{horn2013matrix}
  
\begin{equation}
\label{eqn:neuman-series}
\left(\mathbf{I} +  \ldots + \delta^{k-1} \mathbf{A}^{k-1} + \ldots \right) = \left(\mathbf{I} - \delta \mathbf{A}\right)^{-1}.
\end{equation}

At this point, the term $\left(\mathbf{I} - \delta \mathbf{A}\right)^{-1} \times \mathbf{1}$ is exactly the {\em Katz centrality score} \cite{katz1953new,leicht2006vertex}, a popular centrality metric that defines the importance of a node as a function of its similarity with other nodes in $\Gr.$
%A main issue with Katz's centrality score is how to fix $\delta$, we discuss this separately in Section \ref{sub:relation-geometric-exponential}.
Hence, we can say that the geometric potential gain combines two kind of contributions: {\em popularity,} as captured by node degree, and {\em similarity} as captured by Katz's similarity score. 

%The idea of combining the two aspects had in fact been investigated in some approaches to network exploration that we reviewed in Section \ref{sec:related-works}, viz. \cite{csimcsek2008navigating}. 

It is also instructive to consider what happens for extreme values of $\delta$: if $\delta \to 0$, then the geometric potential gain tends to $\mathbf{A} \times \mathbf{1}$, i.e., it coincides with degree.
In contrast, if $\delta \to \frac{1}{\lambda_1}$, then the Katz centrality score converges to {\em eigenvector centrality} \cite{benzi2014matrix}, another popular metric adopted in Network Science. 
Boldi et al. \cite{boldi2014axioms, DBLP:journals/netsci/BoldiLV17, DBLP:journals/netsci/Vigna16, DBLP:conf/www/Boldi15} show that the Katz Centrality score is also strictly related to the PageRank. 
More specifically, the PageRank vector $\mathbf{p}$ coincides with the Katz Centrality score provided that the adjacency matrix $\mathbf{A}$ is replaced by its row-normalized version $\overline{\mathbf{A}}$:
\begin{equation}
\mathbf{p} = \left(1 - \alpha\right) \sum_{k=0}^{+\infty} \alpha^i \overline{\mathbf{A}}^i \times \mathbf{1}
\end{equation} 
Here, the parameter $\alpha$ is the so-called PageRank {\em damping factor}.
Let us now concentrate on the exponential potential gain. We rewrite Equation \ref{eqn:exponential-potential-gain} as follows:

\begin{align*}
\mathbf{e} &= \left(\mathbf{A} + \mathbf{A}^2 + \ldots +
\frac{1}{\left(k -1 \right) !} \mathbf{A}^k + \ldots\right) \times
\mathbf{1}\nonumber\\ 
 &=
\mathbf{A} \times \left(\mathbf{I} + \mathbf{A} + \ldots + \frac{1}{k
!} \mathbf{A}^k + \ldots\right) \times \mathbf{1}\\
& =
\mathbf{A} \times \exp(\mathbf{A}) \times \mathbf{1}\nonumber
\label{eqn:exponential-potential-series}
\end{align*}

\noindent
where $\exp(\mathbf{A}) = \sum_{k=1}^{+\infty} \frac{1}{k!} \mathbf{A}^k$ is the exponential of $\mathbf{A}$ \cite{higham2008functions}. 

The exponential of a matrix has been used to introduce other centrality scores such as {\em communicability} or {\em subgraph centrality} \cite{estrada2012physics,benzi2014matrix}. 

Specifically, $\exp\left(\mathbf{A}\right)_{ij}$ measures how easy is to send a unit of flow from a node $i$ to a node $j$ and vice versa.
Such a parameter is known as {\em communicability} and it can be regarded as a measure of similarity between a pair of nodes. 
Communicability has been successfully used to discover communities in networks \cite{estrada2012physics}. 
The product $\exp(\mathbf{A}) \times \mathbf{1}$ yields a centrality metric which defines the importance of a node as function of its ability to communicate with all other nodes in the network.
In turn, the diagonal entry $\exp\left(\mathbf{A}\right)_{ii}$ of the matrix exponential defines a further centrality metric called {\em subgraph centrality} \cite{estrada2005subgraph}. 
As a result of the rewriting above, we clearly see exponential potential gain as dependent on two factors: popularity of $i$ (i.e., its degree) and similarity of $i$ with all other nodes in the network.

The computation of the geometric (resp., exponential) potential gain for {\em all nodes} in $\Gr$ needs the specification of the full adjacency matrix $\mathbf{A}$; in this sense, the geometric and the exponential potential gain should be considered as {\em global centrality metrics}, on par with the Katz centrality score and Subgraph centrality.

% --------------------------------------------------------------------------
\COMMENT
\subsection{The relation between the Geometric and the Exponential Potential Gain}
\label{sub:relation-geometric-exponential}

In this section we present some guidelines on how to choose the $\delta$ factor discussed in the previous section. 

A straightforward choice would be to set $\delta = (2\lambda_1)^{-1}$ \cite{katz1953new} or, in analogy with the Google PageRank damping factor, $\delta =0.85\lambda_1^{-1}$ \cite{benzi2013total}. On the other hand, Foster et al. \cite{foster2001faster} suggested the following:

$$
\delta = \frac{1}{\norm{\mathbf{A}}_{\infty} + 1}
$$

\noindent 
where $\norm{\mathbf{A}}_{\infty} = \max_{1 \leq i \leq n} \sum_{j= 1}^{n} \vert \mathbf{A}_{ij}\vert$.

It is instructive to investigate the existence of a  {\em crossover point} $\delta^{c}$, i.e. to discover a value of $\delta$ at which the geometric and the exponential gain of a node $i$ coincide. To this end, we provide the following result.

\begin{theorem}
Let $\Gr$ be a graph with adjacency matrix $\mathbf{A}$ and eigenvalues $\lambda_1 \geq \lambda_2 \geq \dots \geq \lambda_n$. 
For each node $i$ and for $\delta \in (0, \lambda_1^{-1})$, the geometric and the exponential gains of $i$ coincide if and only if one of the following holds:

\begin{enumerate}
	\item $\lambda_i = 0$, or

	\item $\delta = \delta^{c} = \frac{e^{\lambda_i} - 1}{\lambda_ie^{\lambda_i}}$, provided that $\delta^{c} < \lambda_1^{-1}$.
\end{enumerate}
\end{theorem}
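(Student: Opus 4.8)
The plan is to diagonalize everything in the orthonormal eigenbasis $\vv_1,\dots,\vv_n$ of $\A$ and then compare the two gains coefficient by coefficient. First I would record the spectral decomposition $\A = \sum_{k=1}^{n} \lambda_k \vv_k \vv_k^{\top}$, which is available because $\A$ is symmetric (Section \ref{sec:background}). Any analytic function of $\A$ acts diagonally in this basis, so from the closed forms derived just above the theorem I would write
\begin{align*}
\mathbf{g} &= \A(\mathbf{I}-\delta\A)^{-1}\mathbf{1} = \sum_{k=1}^{n} \frac{\lambda_k}{1-\delta\lambda_k}\,(\vv_k^{\top}\mathbf{1})\,\vv_k, \\
\mathbf{e} &= \A\exp(\A)\mathbf{1} = \sum_{k=1}^{n} \lambda_k e^{\lambda_k}\,(\vv_k^{\top}\mathbf{1})\,\vv_k.
\end{align*}
Here I would observe that $\delta<\lambda_1^{-1}$ guarantees $\delta|\lambda_k|<1$ for every $k$, since by Perron--Frobenius $\lambda_1=\max_k|\lambda_k|$ for the adjacency matrix of an undirected graph; hence every factor $1-\delta\lambda_k$ is strictly positive and the Neuman series converges.

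Because $\{\vv_k\}$ is an orthonormal basis, the contribution of the eigenpair $\langle\lambda_i,\vv_i\rangle$ to $\mathbf{g}$ equals its contribution to $\mathbf{e}$ exactly when the two scalar coefficients agree, i.e.\ $\frac{\lambda_i}{1-\delta\lambda_i}(\vv_i^{\top}\mathbf{1}) = \lambda_i e^{\lambda_i}(\vv_i^{\top}\mathbf{1})$. Under the standing assumption that the component $\vv_i^{\top}\mathbf{1}$ is nonzero, I can cancel it and reduce the identity to $\lambda_i\left(\frac{1}{1-\delta\lambda_i}-e^{\lambda_i}\right)=0$. This is a product of two factors, so it vanishes precisely when $\lambda_i=0$, which is case (1), or when $\frac{1}{1-\delta\lambda_i}=e^{\lambda_i}$.

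The remaining step is to solve this transcendental condition for $\delta$. Since $1-\delta\lambda_i>0$, I rewrite $\frac{1}{1-\delta\lambda_i}=e^{\lambda_i}$ as $1-\delta\lambda_i=e^{-\lambda_i}$, so $\delta\lambda_i=1-e^{-\lambda_i}$ and $\delta=\frac{1-e^{-\lambda_i}}{\lambda_i}=\frac{e^{\lambda_i}-1}{\lambda_i e^{\lambda_i}}$, which is case (2). I would then verify admissibility: a short sign check shows $\delta^{c}>0$ for every $\lambda_i\neq 0$ (numerator and denominator share a sign, and the ratio tends to $1$ as $\lambda_i\to 0$), so the only genuine restriction is the upper bound $\delta^{c}<\lambda_1^{-1}$. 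This is exactly the proviso attached to case (2); when it fails, no crossover exists within the admissible interval for that eigenvalue.

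The main obstacle I anticipate is conceptual rather than computational: pinning down what ``the gains of $i$ coincide'' means, since $\mathbf{g}$ and $\mathbf{e}$ are vectors. The reading that makes the stated per-eigenvalue conditions correct is the spectral one, namely equality of the $i$-th eigencomponents, and I would make this explicit at the outset together with the assumption $\vv_i^{\top}\mathbf{1}\neq 0$ (otherwise that eigencomponent is null in both gains and the equivalence degenerates). Once the spectral decomposition is in place, everything else is routine algebra.
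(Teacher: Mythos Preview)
Your proposal follows essentially the same route as the paper: diagonalize $\A$ in its orthonormal eigenbasis, apply the matrix functions $x\mapsto x/(1-\delta x)$ and $x\mapsto xe^{x}$ eigenvalue by eigenvalue, and solve the resulting scalar identity $\frac{\lambda_i}{1-\delta\lambda_i}=\lambda_i e^{\lambda_i}$ for $\delta$. You are in fact more careful than the paper in making explicit the intended reading (equality of the $i$-th \emph{eigen}components) and the nondegeneracy hypothesis $\vv_i^{\top}\mathbf{1}\neq 0$; the paper's proof tacitly replaces this factor by $\mathbf{u}_i^{\top}\mathbf{u}_i=1$.
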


\begin{proof}
Recall that for sufficiently large values of $k$, we can approximate the geometric and the exponential gain defined in Equations \ref{eqn:geometric-potential-series} and \ref{eqn:exponential-potential-series} as follows:

\[
\mathbf{g} = \mathbf{A} \times \left(\mathbf{I} - \delta \mathbf{A}\right)^{-1} \times\mathbf{1} \quad \mbox{and} \quad \mathbf{e} = \mathbf{A} \times \exp(\mathbf{A}) \times \mathbf{1}
\]

Recall that $\mathbf{A}$ is a square and symmetric matrix.
Thus, it admits the following eigendecomposition, 

\[
\mathbf{A} = \mathbf{D}^{-1} \times \mathbf{\Lambda} \times \mathbf{D}
\]

where $\mathbf{\Lambda}$ is a diagonal matrix storing the eigenvalues $\lambda_1, \lambda_2, \ldots, \lambda_n$ of $\mathbf{A}$ and $\mathbf{D}$ is an orthonormal matrix, whose columns coincide with the eigenvectors $\mathbf{u}_1, \mathbf{u}_2, \ldots \mathbf{u}_n$ of $\mathbf{A}$.
%The $i$-th row of $\mathbf{D}$ will correspond to the eigenvector $\mathbf{u}_i$, and we have $\mathbf{u}_i^T\mathbf{u}_j = 0$ and $\mathbf{u}_i^T\mathbf{u}_i = 1$ (the same of course is true for the columns of $\mathbf{D}^{T}$).

Now recall \cite{higham2008functions} that, for any function $f$, the matrix $f(\mathbf{A})$ is still diagonalizable and, for any eigenvalue $\lambda_i$ of $\mathbf{A}$ we have that $f(\lambda_i)$ is an eigenvalue of $f(\mathbf{A})$. 
In addition, the matrices $\mathbf{A}$ and $f(\mathbf{A})$ share the same eigenvectors so we have $f(\mathbf{A}) = \mathbf{D}^{-1} \times f(\mathbf{\Lambda}) \times \mathbf{D}$.

Let us consider now the application of the two functions $f_1(x) = \frac{x}{1 - \delta x}$ and $f_2(x) = xe^x$ to matrix $\mathbf{A}$. The eigenvalues of the matrix $f_1(\mathbf{A}) = \mathbf{A} \times \left(\mathbf{I} - \delta \mathbf{A}\right)^{-1}$ are

\begin{equation}
\label{eqn:lambda_gpg}
\frac{\lambda_1}{1 - \delta \lambda_1}, \frac{\lambda_2}{1 - \delta \lambda_2}, \ldots, \frac{\lambda_n}{1 - \delta \lambda_n}
\end{equation}

whereas the eigenvalues of the matrix  $f_2(\mathbf{A}) = \mathbf{A} \times \exp(\mathbf{A})$ are 

\begin{equation}
\label{eqn:lambda_epg}
\lambda_1e^{\lambda_1}, \lambda_2e^{\lambda_2}, \ldots, \lambda_ne^{\lambda_n}.
\end{equation}

Let us introduce $\mathbf{\Lambda}_g$ and $\mathbf{\Lambda}_e$, the diagonal matrices storing the eigenvalues of the matrices $\mathbf{A} \times \exp(\mathbf{A})$ and $\mathbf{A} \times \left(\mathbf{I} - \delta \mathbf{A}\right)^{-1}$, respectively. 
Let us now compute the difference between the geometric and potential gain:

\begin{dmath}
\label{eqn:difference-geometric-potential}
\mathbf{g} - \mathbf{e} = \mathbf{A} \times \left(\mathbf{I} - \delta \mathbf{A}\right)^{-1} \times\mathbf{1} - \mathbf{A} \times \exp(\mathbf{A}) \times \mathbf{1} = \mathbf{D}^{-1} \times \mathbf{\Lambda_g} \times \mathbf{D} \times \mathbf{1} -\mathbf{D}^{-1} \times \mathbf{\Lambda_e} \times \mathbf{D} \times \mathbf{1} =
\left(\mathbf{D}^{-1} \times \left(\mathbf{\Lambda_g} - \mathbf{\Lambda_e}\right) \times \mathbf{D}\right) \times \mathbf{1}
\end{dmath}

We focus on the $i$-th component of vector $\mathbf{g} - \mathbf{e}$ and observe that its value $\Delta_i$ is given as:

\[
\Delta_i = \left(\lambda_i e^{\lambda_i} - \frac{\lambda_i}{1 - \delta \lambda_i}\right) \mathbf{u}_i^T\mathbf{u}_i =  \left(\lambda_i e^{\lambda_i} - \frac{\lambda_i}{1 - \delta \lambda_i}\right)
\]

Here we used the fact that eigenvectors of $\mathbf{A}$ form an orthonormal basis.
If we assume that $\lambda_ i \neq 0$, then $\Delta_i = 0$ if and only if:

\begin{equation}
\label{eqn:delta-crossover}
\delta = \frac{e^{\lambda_i} - 1}{\lambda_ie^{\lambda_i}}
\end{equation}

\noindent
which completes the proof.
\end{proof}

% -------------------------------------------------------------------------------
\subsection{Calculation of Geometric and Exponential Potential Gains}
\label{sub:fast-calculation-geom-expon}

In this section we present our algorithm for the computation of the geometric potential and exponential potential gain. 

\blue{Our algorithm can be implemented in few lines of code in any high-level programming language, since it applies the expansion series provided in Equations \ref{eqn:geometric-potential-series} and \ref{eqn:exponential-potential-series}.
Our approach provides insight on how the walk length $k$ affects the calculation of the geometric (resp., exponential) potential gain: indeed, if we stop the expansion of Equation \ref{eqn:geometric-potential-series} (resp. Eq. \ref{eqn:exponential-potential-series}) after the first $k$ terms, then, we would only include the walks up to length $k$ in the calculation of the geometric (resp., exponential) potential gain.}

Let us consider the computational complexity of our solution.
\blue{We begin with the} geometric potential gain and assume that we stop expanding the Neumann series after generating walks of length $k^{\star}$.
In such a case, it is easy to see that cost will be in $O(k^{\star}\vert E\vert)$. 
\blue{In fact, for any $j$ such that $1 < j < k^{\star}$, let us set 
$\mathbf{y}_j = \delta^{j-1}\mathbf{A}^{j} \times \mathbf{1}$ and 
suppose that we have stored the sequence $\mathcal{Y} = \{\mathbf{y}_1, \mathbf{y}_2, \ldots, \mathbf{y}_{j-1}\}$, with $\mathbf{y}_1 = \mathbf{1}$, $\mathbf{y}_2 = \mathbf{A} \times \mathbf{1}$.}   

\noindent
Hence, the following recurrence holds:
 
\begin{dmath}
\label{eqn:updates-geometric-potential-gain}
\mathbf{y}_j
= 
\delta^{j-1} \mathbf{A}^{j}\times \mathbf{1} 
= 
\left(\delta \mathbf{A}\right) \times \left(\delta^{j-2} \mathbf{A}^{j-1}\times \mathbf{1}\right)
= 
\left(\delta \mathbf{A}\right) \times \mathbf{y}_{j-1}
\end{dmath}

The last equality states that any term $\mathbf{y}_j$ can be calculated as the product of a sparse matrix ($\delta \mathbf{A}$) by a vector ($\mathbf{y}_{j-1}$), already computed in the previous iteration. 
Such an operation takes $O(\vert E \vert)$ steps which, in the case of sparse networks, is $O(n)$.

Similarly, given that $\mathbf{g}$ can be expressed as $\mathbf{g} = \sum_{j=0}^{k^{\star}} \mathbf{y}_j$, we conclude that the cost required to compute the geometric potential gain amounts is $O(k^{\star}n)$.
As for space complexity, the cost for computing $\mathbf{g}$ is $O(\vert E\vert)$.

The computation of the geometric potential gain requires to fix $\delta$ beforehand, which, in turn, requires to fix an approximation of the spectral radius $\lambda_1$. 
The literature on the estimation of $\lambda_1$ provides some bounds on it \cite{das2004some,stevanovic2015spectral} but, available upper bounds are often not tight and, thus, uninformative; therefore, an alternate way to approximate $\lambda_1$ is to rely on algorithms such as the {\em Power Iteration Method} \cite{heath2002scientific}. 
On the other hand, if we target very large graphs, {\em sampling techniques} seem the best option \cite{han2017closed}.

Analogous results for both time and space complexity hold for the computation of the exponential potential gain as we show next. 
Define a sequence $\mathcal{Z} = \{\mathbf{z}_i\}$ recursively as follows:

\begin{align*}
\mathbf{z}_1 = \mathbf{1} \\
\mathbf{z}_2 = \mathbf{A} \times \mathbf{1}\\
\dots\\
\mathbf{z}_i = \frac{1}{i -2}\mathbf{A}\times\mathbf{z}_{i-1}
\end{align*}

Therefore, any term $\mathbf{z}_j$ can be calculated as the product of a sparse matrix ($\mathbf{A}$) by a vector ($\mathbf{z}_{j-1}$), which has been already computed in the previous iteration. 
Such an operation takes $O(\vert E \vert)$.

Given that $\mathbf{e}$ can be expressed as $\mathbf{e} \simeq \sum_{j=0}^{k^{\star}} \mathbf{z}_j$, we can conclude that the (worst-case) time complexity for the calculation of the exponential potential gain is $O(k^{\star}\vert E \vert)$; similarly the space complexity is $O(\vert E \vert)$, hence for sparse graphs both time and space complexity reduce to $O(n).$

\ENDCOMMENT

%\input{computation}

%\input{experiments_mod-WI19-from-TKDE-2nd-no_color}

% --------------------------------------
\section{Conclusions}
\label{sec:conc}

We have introduced the potential gain, an index to rank nodes in graphs that captures the ability of a node to act as a target point for navigation within the network.
We have defined two variants of the potential gain, the geometric and exponential potential gain. 
We then proposed two iterative algorithms that compute the geometric and exponential potential gain and proved their convergence.
We evaluated the scalability of our algorithms on three real large datasets. 

We have discovered connections between the geometric potential gain and other, well-known, centrality metrics; GPG provides a new, mixed global-local centrality measure.
Indeed, the PG as a centrality index has several merits:

\begin{itemize}
    \item it unifies Katz and Communicability  into a single framework;
    
    \item in its definition in terms of the PG it allows us to provide novel and efficient approximations of these indices;
    
    \item it provides an instance of a novel class of composite indices, in this case  DC*Katz, and
    
    \item as each vertex has clear visibility of its neighbours, the realisation that PG is a combination of local (Degree) and global (Katz) centrality makes complete sense, in our opinion.
\end{itemize}

It is also possible that these results will open the door to a new interpretation of social phenomena related to Travers-Milgram's ``small world'' experiment \cite{travers1967small}.

One question that could be discussed at this point is which of the two new measures could be considered the best analysis tool large networks. 
%MOD: WI19:
Early experimental results indicate different rates of convergence but no clear ``winner.''

From a computational standpoint, the geometric potential gain is clearly superior. 
So, for the analysis of very large networks and/or modest hardware resources it is the navigability score of choice.
One practical difference however remains.
The exponential potential gain is parameter-free and can be applied directly.
On the other hand, the geometric potential gain is parametric in $\delta$ thus it requires a careful tuning of the algorithm.

\COMMENT
The importance of $\delta$ is also underscored by the fact that for values close or equal to $1/\lambda_1$ we observed the scores for the geometric potential gain, the Exponential potential gain and for Katz centrality to fall into some sort of alignment.
%This is not the case for lower values of $\delta$ and we believe that more research is needed to understand this behavior. 
\ENDCOMMENT

Another topic for future work is investigation on the relationship between network robustness and network navigability. 
To this end, we intend to design an experiment in which graph nodes are ranked on the basis of their geometric/exponential potential gain and then are progressively removed from the graph.
Basic properties about graph topology, such as the number and size of connected components shall be re-evaluated upon node deletion.
We also plan to study how adding edges can increase the geometric/exponential potential gain of a target group of nodes.

\bibliographystyle{ACM-Reference-Format}
\bibliography{potential_gain+boldi}

%%% -*-BibTeX-*-
%%% Do NOT edit. File created by BibTeX with style
%%% ACM-Reference-Format-Journals [18-Jan-2012].

\begin{thebibliography}{37}

%%% ====================================================================
%%% NOTE TO THE USER: you can override these defaults by providing
%%% customized versions of any of these macros before the \bibliography
%%% command.  Each of them MUST provide its own final punctuation,
%%% except for \shownote{}, \showDOI{}, and \showURL{}.  The latter two
%%% do not use final punctuation, in order to avoid confusing it with
%%% the Web address.
%%%
%%% To suppress output of a particular field, define its macro to expand
%%% to an empty string, or better, \unskip, like this:
%%%
%%% \newcommand{\showDOI}[1]{\unskip}   % LaTeX syntax
%%%
%%% \def \showDOI #1{\unskip}           % plain TeX syntax
%%%
%%% ====================================================================

\ifx \showCODEN    \undefined \def \showCODEN     #1{\unskip}     \fi
\ifx \showDOI      \undefined \def \showDOI       #1{#1}\fi
\ifx \showISBNx    \undefined \def \showISBNx     #1{\unskip}     \fi
\ifx \showISBNxiii \undefined \def \showISBNxiii  #1{\unskip}     \fi
\ifx \showISSN     \undefined \def \showISSN      #1{\unskip}     \fi
\ifx \showLCCN     \undefined \def \showLCCN      #1{\unskip}     \fi
\ifx \shownote     \undefined \def \shownote      #1{#1}          \fi
\ifx \showarticletitle \undefined \def \showarticletitle #1{#1}   \fi
\ifx \showURL      \undefined \def \showURL       {\relax}        \fi
% The following commands are used for tagged output and should be
% invisible to TeX
\providecommand\bibfield[2]{#2}
\providecommand\bibinfo[2]{#2}
\providecommand\natexlab[1]{#1}
\providecommand\showeprint[2][]{arXiv:#2}

\bibitem[\protect\citeauthoryear{Agreste, {De Meo}, Ferrara, Piccolo, and
  Provetti}{Agreste et~al\mbox{.}}{2015a}]%
        {Agreste-anobii15}
\bibfield{author}{\bibinfo{person}{S. Agreste}, \bibinfo{person}{P. {De Meo}},
  \bibinfo{person}{E. Ferrara}, \bibinfo{person}{S. Piccolo}, {and}
  \bibinfo{person}{A. Provetti}.} \bibinfo{year}{2015}\natexlab{a}.
\newblock \showarticletitle{Analysis of a Heterogeneous Social Network of
  Humans and Cultural Objects}.
\newblock \bibinfo{journal}{\emph{{IEEE} Transactions on Systems, Man, and
  Cybernetics: Systems}} \bibinfo{volume}{45}, \bibinfo{number}{4}
  (\bibinfo{year}{2015}), \bibinfo{pages}{559--570}.
\newblock


\bibitem[\protect\citeauthoryear{Agreste, {De Meo}, Ferrara, Piccolo, and
  Provetti}{Agreste et~al\mbox{.}}{2015b}]%
        {AgresteMFPP15}
\bibfield{author}{\bibinfo{person}{S. Agreste}, \bibinfo{person}{P. {De Meo}},
  \bibinfo{person}{E. Ferrara}, \bibinfo{person}{S. Piccolo}, {and}
  \bibinfo{person}{A. Provetti}.} \bibinfo{year}{2015}\natexlab{b}.
\newblock \showarticletitle{{Trust Networks: Topology, Dynamics, and
  Measurements}}.
\newblock \bibinfo{journal}{\emph{{IEEE} Internet Computing}}
  \bibinfo{volume}{19}, \bibinfo{number}{6} (\bibinfo{year}{2015}),
  \bibinfo{pages}{26--35}.
\newblock
\urldef\tempurl%
\url{https://doi.org/10.1109/MIC.2015.93}
\showDOI{\tempurl}


\bibitem[\protect\citeauthoryear{Alber, Albert, and Nakarado}{Alber
  et~al\mbox{.}}{2004}]%
        {albert2004structural}
\bibfield{author}{\bibinfo{person}{R. Alber}, \bibinfo{person}{I. Albert},
  {and} \bibinfo{person}{G.L. Nakarado}.} \bibinfo{year}{2004}\natexlab{}.
\newblock \showarticletitle{{Structural vulnerability of the North American
  power grid}}.
\newblock \bibinfo{journal}{\emph{Physical review E}} \bibinfo{volume}{69},
  \bibinfo{number}{2} (\bibinfo{year}{2004}), \bibinfo{pages}{025103}.
\newblock


\bibitem[\protect\citeauthoryear{Benzi and Klymko}{Benzi and Klymko}{2013}]%
        {benzi2013total}
\bibfield{author}{\bibinfo{person}{M. Benzi} {and} \bibinfo{person}{C.
  Klymko}.} \bibinfo{year}{2013}\natexlab{}.
\newblock \showarticletitle{Total communicability as a centrality measure}.
\newblock \bibinfo{journal}{\emph{Journal of Complex Networks}}
  \bibinfo{volume}{1}, \bibinfo{number}{2} (\bibinfo{year}{2013}),
  \bibinfo{pages}{124--149}.
\newblock


\bibitem[\protect\citeauthoryear{Benzi and Klymko}{Benzi and Klymko}{2015}]%
        {benzi2014matrix}
\bibfield{author}{\bibinfo{person}{M. Benzi} {and} \bibinfo{person}{C.
  Klymko}.} \bibinfo{year}{2015}\natexlab{}.
\newblock \showarticletitle{On the limiting behavior of parameter-dependent
  network centrality measures}.
\newblock \bibinfo{journal}{\emph{SIAM J. Matrix Anal. Appl.}}
  \bibinfo{volume}{36}, \bibinfo{number}{2} (\bibinfo{year}{2015}),
  \bibinfo{pages}{686--706}.
\newblock


\bibitem[\protect\citeauthoryear{Boldi}{Boldi}{2015}]%
        {DBLP:conf/www/Boldi15}
\bibfield{author}{\bibinfo{person}{Paolo Boldi}.}
  \bibinfo{year}{2015}\natexlab{}.
\newblock \showarticletitle{Large-scale Network Analytics: Diffusion-based
  Computation of Distances and Geometric Centralities}. In
  \bibinfo{booktitle}{\emph{Proceedings of the 24th International Conference on
  World Wide Web Companion, {WWW} 2015, Florence, Italy, May 18-22, 2015 -
  Companion Volume}}, \bibfield{editor}{\bibinfo{person}{Aldo Gangemi},
  \bibinfo{person}{Stefano Leonardi}, {and} \bibinfo{person}{Alessandro
  Panconesi}} (Eds.). \bibinfo{publisher}{{ACM}}, \bibinfo{pages}{1313}.
\newblock
\showISBNx{978-1-4503-3473-0}
\urldef\tempurl%
\url{https://doi.org/10.1145/2740908.2741703}
\showDOI{\tempurl}


\bibitem[\protect\citeauthoryear{Boldi, Luongo, and Vigna}{Boldi
  et~al\mbox{.}}{2017}]%
        {DBLP:journals/netsci/BoldiLV17}
\bibfield{author}{\bibinfo{person}{Paolo Boldi}, \bibinfo{person}{Alessandro
  Luongo}, {and} \bibinfo{person}{Sebastiano Vigna}.}
  \bibinfo{year}{2017}\natexlab{}.
\newblock \showarticletitle{Rank monotonicity in centrality measures}.
\newblock \bibinfo{journal}{\emph{Network Science}} \bibinfo{volume}{5},
  \bibinfo{number}{4} (\bibinfo{year}{2017}), \bibinfo{pages}{529--550}.
\newblock
\urldef\tempurl%
\url{https://doi.org/10.1017/nws.2017.21}
\showDOI{\tempurl}


\bibitem[\protect\citeauthoryear{Boldi and Vigna}{Boldi and Vigna}{2014}]%
        {boldi2014axioms}
\bibfield{author}{\bibinfo{person}{Paolo Boldi} {and}
  \bibinfo{person}{Sebastiano Vigna}.} \bibinfo{year}{2014}\natexlab{}.
\newblock \showarticletitle{Axioms for Centrality}.
\newblock \bibinfo{journal}{\emph{Internet Mathematics}} \bibinfo{volume}{10},
  \bibinfo{number}{3-4} (\bibinfo{year}{2014}), \bibinfo{pages}{222--262}.
\newblock
\urldef\tempurl%
\url{https://doi.org/10.1080/15427951.2013.865686}
\showDOI{\tempurl}


\bibitem[\protect\citeauthoryear{Brin and Page}{Brin and Page}{1998}]%
        {brin1998anatomy}
\bibfield{author}{\bibinfo{person}{S. Brin} {and} \bibinfo{person}{L. Page}.}
  \bibinfo{year}{1998}\natexlab{}.
\newblock \showarticletitle{{The anatomy of a large-scale hypertextual Web
  search engine}}.
\newblock \bibinfo{journal}{\emph{Computer networks and ISDN systems}}
  \bibinfo{volume}{30}, \bibinfo{number}{1-7} (\bibinfo{year}{1998}),
  \bibinfo{pages}{107--117}.
\newblock


\bibitem[\protect\citeauthoryear{Broder, Kumar, Maghoul, Raghavan, Rajagopalan,
  Stata, Tomkins, and Wiener}{Broder et~al\mbox{.}}{2000}]%
        {broder2000graph}
\bibfield{author}{\bibinfo{person}{A. Broder}, \bibinfo{person}{R. Kumar},
  \bibinfo{person}{F. Maghoul}, \bibinfo{person}{P. Raghavan},
  \bibinfo{person}{S. Rajagopalan}, \bibinfo{person}{R. Stata},
  \bibinfo{person}{A. Tomkins}, {and} \bibinfo{person}{J. Wiener}.}
  \bibinfo{year}{2000}\natexlab{}.
\newblock \showarticletitle{{Graph structure in the Web}}.
\newblock \bibinfo{journal}{\emph{Computer Networks}} \bibinfo{volume}{33},
  \bibinfo{number}{1-6} (\bibinfo{year}{2000}), \bibinfo{pages}{309--320}.
\newblock


\bibitem[\protect\citeauthoryear{Chung, Horn, and Tsiatas}{Chung
  et~al\mbox{.}}{2009}]%
        {chung2009distributing}
\bibfield{author}{\bibinfo{person}{F. Chung}, \bibinfo{person}{P. Horn}, {and}
  \bibinfo{person}{A. Tsiatas}.} \bibinfo{year}{2009}\natexlab{}.
\newblock \showarticletitle{{Distributing antidote using Pagerank vectors}}.
\newblock \bibinfo{journal}{\emph{Internet Mathematics}} \bibinfo{volume}{6},
  \bibinfo{number}{2} (\bibinfo{year}{2009}), \bibinfo{pages}{237--254}.
\newblock


\bibitem[\protect\citeauthoryear{Cvetkovic, Rowlinson, and Simic}{Cvetkovic
  et~al\mbox{.}}{1997}]%
        {cvetkovic1997eigenspaces}
\bibfield{author}{\bibinfo{person}{D. Cvetkovic}, \bibinfo{person}{P.
  Rowlinson}, {and} \bibinfo{person}{S. Simic}.}
  \bibinfo{year}{1997}\natexlab{}.
\newblock \bibinfo{booktitle}{\emph{Eigenspaces of graphs}}.
\newblock \bibinfo{publisher}{Cambridge University Press}.
\newblock


\bibitem[\protect\citeauthoryear{Dodds, Muhamad, and Watts}{Dodds
  et~al\mbox{.}}{2003}]%
        {dodds2003experimental}
\bibfield{author}{\bibinfo{person}{P.S. Dodds}, \bibinfo{person}{R. Muhamad},
  {and} \bibinfo{person}{D.J. Watts}.} \bibinfo{year}{2003}\natexlab{}.
\newblock \showarticletitle{An experimental study of search in global social
  networks}.
\newblock \bibinfo{journal}{\emph{Science}} \bibinfo{volume}{301},
  \bibinfo{number}{5634} (\bibinfo{year}{2003}), \bibinfo{pages}{827--829}.
\newblock


\bibitem[\protect\citeauthoryear{Estrada, Hatano, and Benzi}{Estrada
  et~al\mbox{.}}{2012}]%
        {estrada2012physics}
\bibfield{author}{\bibinfo{person}{E. Estrada}, \bibinfo{person}{N. Hatano},
  {and} \bibinfo{person}{M. Benzi}.} \bibinfo{year}{2012}\natexlab{}.
\newblock \showarticletitle{The physics of communicability in complex
  networks}.
\newblock \bibinfo{journal}{\emph{Physics reports}} \bibinfo{volume}{514},
  \bibinfo{number}{3} (\bibinfo{year}{2012}), \bibinfo{pages}{89--119}.
\newblock


\bibitem[\protect\citeauthoryear{Estrada and Rodriguez-Velazquez}{Estrada and
  Rodriguez-Velazquez}{2005}]%
        {estrada2005subgraph}
\bibfield{author}{\bibinfo{person}{E. Estrada} {and} \bibinfo{person}{J.A.
  Rodriguez-Velazquez}.} \bibinfo{year}{2005}\natexlab{}.
\newblock \showarticletitle{Subgraph centrality in complex networks}.
\newblock \bibinfo{journal}{\emph{Physical Review E}} \bibinfo{volume}{71},
  \bibinfo{number}{5} (\bibinfo{year}{2005}), \bibinfo{pages}{056103}.
\newblock


\bibitem[\protect\citeauthoryear{Fenner, Levene, and Loizou}{Fenner
  et~al\mbox{.}}{2008}]%
        {fenner2008modelling}
\bibfield{author}{\bibinfo{person}{T. Fenner}, \bibinfo{person}{M. Levene},
  {and} \bibinfo{person}{G. Loizou}.} \bibinfo{year}{2008}\natexlab{}.
\newblock \showarticletitle{{Modelling the navigation potential of a Web
  page}}.
\newblock \bibinfo{journal}{\emph{Theoretical Computer Science}}
  \bibinfo{volume}{396}, \bibinfo{number}{1-3} (\bibinfo{year}{2008}),
  \bibinfo{pages}{88--96}.
\newblock


\bibitem[\protect\citeauthoryear{Goel, Muhamad, and Watts}{Goel
  et~al\mbox{.}}{2009}]%
        {GoMuWa09}
\bibfield{author}{\bibinfo{person}{S. Goel}, \bibinfo{person}{R. Muhamad},
  {and} \bibinfo{person}{D.J. Watts}.} \bibinfo{year}{2009}\natexlab{}.
\newblock \showarticletitle{Social search in \lq\lq Small-World\rq\rq
  experiments}. In \bibinfo{booktitle}{\emph{Proc.\ of the International
  Conference on World Wide Web ( {WWW} 2009)}}. \bibinfo{address}{Madrid,
  Spain}, \bibinfo{pages}{701--710}.
\newblock


\bibitem[\protect\citeauthoryear{Helic, Strohmaier, Granitzer, and
  Scherer}{Helic et~al\mbox{.}}{2013}]%
        {helic2013models}
\bibfield{author}{\bibinfo{person}{D. Helic}, \bibinfo{person}{M. Strohmaier},
  \bibinfo{person}{M. Granitzer}, {and} \bibinfo{person}{R. Scherer}.}
  \bibinfo{year}{2013}\natexlab{}.
\newblock \showarticletitle{Models of human navigation in information networks
  based on decentralized search}. In \bibinfo{booktitle}{\emph{Proc.\ of the
  ACM conference on Hypertext and Social Media}}. ACM, \bibinfo{address}{Paris,
  France}, \bibinfo{pages}{89--98}.
\newblock


\bibitem[\protect\citeauthoryear{Higham}{Higham}{2008}]%
        {higham2008functions}
\bibfield{author}{\bibinfo{person}{N.J. Higham}.}
  \bibinfo{year}{2008}\natexlab{}.
\newblock \bibinfo{booktitle}{\emph{Functions of matrices: theory and
  computation}}. Vol.~\bibinfo{volume}{104}.
\newblock \bibinfo{publisher}{SIAM}.
\newblock


\bibitem[\protect\citeauthoryear{Horn and Johnson}{Horn and Johnson}{2013}]%
        {horn2013matrix}
\bibfield{author}{\bibinfo{person}{R. Horn} {and} \bibinfo{person}{C.
  Johnson}.} \bibinfo{year}{2013}\natexlab{}.
\newblock \bibinfo{booktitle}{\emph{Matrix analysis} (\bibinfo{edition}{2}
  ed.)}.
\newblock \bibinfo{publisher}{Cambridge Univ. Press}.
\newblock


\bibitem[\protect\citeauthoryear{Jeong, Tombor, Albert, Oltvai, and
  Barabasi}{Jeong et~al\mbox{.}}{2000}]%
        {jeong2000large}
\bibfield{author}{\bibinfo{person}{H. Jeong}, \bibinfo{person}{B. Tombor},
  \bibinfo{person}{R. Albert}, \bibinfo{person}{Z. Oltvai}, {and}
  \bibinfo{person}{A. Barabasi}.} \bibinfo{year}{2000}\natexlab{}.
\newblock \showarticletitle{The large-scale organization of metabolic
  networks}.
\newblock \bibinfo{journal}{\emph{Nature}} \bibinfo{volume}{407},
  \bibinfo{number}{6804} (\bibinfo{year}{2000}), \bibinfo{pages}{651}.
\newblock


\bibitem[\protect\citeauthoryear{Katz}{Katz}{1953}]%
        {katz1953new}
\bibfield{author}{\bibinfo{person}{L. Katz}.} \bibinfo{year}{1953}\natexlab{}.
\newblock \showarticletitle{A new status index derived from sociometric
  analysis}.
\newblock \bibinfo{journal}{\emph{Psychometrika}} \bibinfo{volume}{18},
  \bibinfo{number}{1} (\bibinfo{year}{1953}), \bibinfo{pages}{39--43}.
\newblock


\bibitem[\protect\citeauthoryear{Kleinberg}{Kleinberg}{2000}]%
        {kleinberg2000small}
\bibfield{author}{\bibinfo{person}{J. Kleinberg}.}
  \bibinfo{year}{2000}\natexlab{}.
\newblock \showarticletitle{The small-world phenomenon: An algorithmic
  perspective}. In \bibinfo{booktitle}{\emph{Proc.\ of the ACM symposium on
  Theory of computing (STOC 2000)}}. ACM, \bibinfo{pages}{163--170}.
\newblock


\bibitem[\protect\citeauthoryear{Leicht, Holme, and Newman}{Leicht
  et~al\mbox{.}}{2006}]%
        {leicht2006vertex}
\bibfield{author}{\bibinfo{person}{E. Leicht}, \bibinfo{person}{P. Holme},
  {and} \bibinfo{person}{M. Newman}.} \bibinfo{year}{2006}\natexlab{}.
\newblock \showarticletitle{Vertex similarity in networks}.
\newblock \bibinfo{journal}{\emph{Physical Review E}} \bibinfo{volume}{73},
  \bibinfo{number}{2} (\bibinfo{year}{2006}), \bibinfo{pages}{026120}.
\newblock


\bibitem[\protect\citeauthoryear{Leskovec, Adamic, and Huberman}{Leskovec
  et~al\mbox{.}}{2007}]%
        {leskovec2007dynamics}
\bibfield{author}{\bibinfo{person}{J. Leskovec}, \bibinfo{person}{L. Adamic},
  {and} \bibinfo{person}{B. Huberman}.} \bibinfo{year}{2007}\natexlab{}.
\newblock \showarticletitle{The dynamics of viral marketing}.
\newblock \bibinfo{journal}{\emph{ACM Transactions on the Web}}
  \bibinfo{volume}{1}, \bibinfo{number}{1} (\bibinfo{year}{2007}),
  \bibinfo{pages}{5}.
\newblock


\bibitem[\protect\citeauthoryear{Leskovec and Horvitz}{Leskovec and
  Horvitz}{2008}]%
        {LeHo08}
\bibfield{author}{\bibinfo{person}{J. Leskovec} {and} \bibinfo{person}{E.
  Horvitz}.} \bibinfo{year}{2008}\natexlab{}.
\newblock \showarticletitle{Planetary-scale views on a large instant-messaging
  network}. In \bibinfo{booktitle}{\emph{Proc.\ of the International Conference
  on World Wide Web, {WWW} 2008}}. \bibinfo{address}{Beijing, China},
  \bibinfo{pages}{915--924}.
\newblock


\bibitem[\protect\citeauthoryear{Levene and Wheeldon}{Levene and
  Wheeldon}{2004}]%
        {levene2004navigating}
\bibfield{author}{\bibinfo{person}{M. Levene} {and} \bibinfo{person}{R.
  Wheeldon}.} \bibinfo{year}{2004}\natexlab{}.
\newblock \showarticletitle{{Navigating the World Wide Web}}.
\newblock In \bibinfo{booktitle}{\emph{Web Dynamics}},
  \bibfield{editor}{\bibinfo{person}{M.~Levene} {and}
  \bibinfo{person}{A.~Poulovassilis}} (Eds.). \bibinfo{publisher}{Springer},
  \bibinfo{pages}{117--151}.
\newblock


\bibitem[\protect\citeauthoryear{L{\"u}, Chen, Ren, Zhang, Zhang, and
  Zhou}{L{\"u} et~al\mbox{.}}{2016}]%
        {lu2016vital}
\bibfield{author}{\bibinfo{person}{L. L{\"u}}, \bibinfo{person}{D. Chen},
  \bibinfo{person}{X. Ren}, \bibinfo{person}{Q. Zhang}, \bibinfo{person}{Y.
  Zhang}, {and} \bibinfo{person}{T. Zhou}.} \bibinfo{year}{2016}\natexlab{}.
\newblock \showarticletitle{Vital nodes identification in complex networks}.
\newblock \bibinfo{journal}{\emph{Physics Reports}}  \bibinfo{volume}{650}
  (\bibinfo{year}{2016}), \bibinfo{pages}{1--63}.
\newblock


\bibitem[\protect\citeauthoryear{Newman}{Newman}{2001}]%
        {newman2001structure}
\bibfield{author}{\bibinfo{person}{M. Newman}.}
  \bibinfo{year}{2001}\natexlab{}.
\newblock \showarticletitle{The structure of scientific collaboration
  networks}.
\newblock \bibinfo{journal}{\emph{Proceedings of the National Academy of
  Sciences}} \bibinfo{volume}{98}, \bibinfo{number}{2} (\bibinfo{year}{2001}),
  \bibinfo{pages}{404--409}.
\newblock


\bibitem[\protect\citeauthoryear{Newman}{Newman}{2010}]%
        {newman2010networks}
\bibfield{author}{\bibinfo{person}{M. Newman}.}
  \bibinfo{year}{2010}\natexlab{}.
\newblock \bibinfo{booktitle}{\emph{Networks: an introduction}}.
\newblock \bibinfo{publisher}{Oxford University Press}.
\newblock


\bibitem[\protect\citeauthoryear{Simsek and Jensen}{Simsek and Jensen}{2008}]%
        {csimcsek2008navigating}
\bibfield{author}{\bibinfo{person}{O. Simsek} {and} \bibinfo{person}{D.
  Jensen}.} \bibinfo{year}{2008}\natexlab{}.
\newblock \showarticletitle{Navigating networks by using homophily and degree}.
\newblock \bibinfo{journal}{\emph{Proceedings of the National Academy of
  Sciences}} \bibinfo{volume}{105}, \bibinfo{number}{35}
  (\bibinfo{year}{2008}), \bibinfo{pages}{12758--12762}.
\newblock


\bibitem[\protect\citeauthoryear{Strang}{Strang}{1993}]%
        {strang1993introduction}
\bibfield{author}{\bibinfo{person}{G. Strang}.}
  \bibinfo{year}{1993}\natexlab{}.
\newblock \bibinfo{booktitle}{\emph{Introduction to linear algebra}}.
  Vol.~\bibinfo{volume}{3}.
\newblock \bibinfo{publisher}{Wellesley-Cambridge Press Wellesley, MA}.
\newblock


\bibitem[\protect\citeauthoryear{Travers and Milgram}{Travers and
  Milgram}{1967}]%
        {travers1967small}
\bibfield{author}{\bibinfo{person}{J. Travers} {and} \bibinfo{person}{S.
  Milgram}.} \bibinfo{year}{1967}\natexlab{}.
\newblock \showarticletitle{The small world problem}.
\newblock \bibinfo{journal}{\emph{Phychology Today}} \bibinfo{volume}{1},
  \bibinfo{number}{1} (\bibinfo{year}{1967}), \bibinfo{pages}{61--67}.
\newblock


\bibitem[\protect\citeauthoryear{Vigna}{Vigna}{2016}]%
        {DBLP:journals/netsci/Vigna16}
\bibfield{author}{\bibinfo{person}{Sebastiano Vigna}.}
  \bibinfo{year}{2016}\natexlab{}.
\newblock \showarticletitle{Spectral ranking}.
\newblock \bibinfo{journal}{\emph{Network Science}} \bibinfo{volume}{4},
  \bibinfo{number}{4} (\bibinfo{year}{2016}), \bibinfo{pages}{433--445}.
\newblock
\urldef\tempurl%
\url{https://doi.org/10.1017/nws.2016.21}
\showDOI{\tempurl}


\bibitem[\protect\citeauthoryear{Watts and Strogatz}{Watts and
  Strogatz}{1998}]%
        {watts1998collective}
\bibfield{author}{\bibinfo{person}{D. Watts} {and} \bibinfo{person}{S.
  Strogatz}.} \bibinfo{year}{1998}\natexlab{}.
\newblock \showarticletitle{Collective dynamics of small-world networks}.
\newblock \bibinfo{journal}{\emph{Nature}} \bibinfo{volume}{393},
  \bibinfo{number}{6684} (\bibinfo{year}{1998}), \bibinfo{pages}{440}.
\newblock


\bibitem[\protect\citeauthoryear{West and Leskovec}{West and Leskovec}{2012}]%
        {west2012automatic}
\bibfield{author}{\bibinfo{person}{R. West} {and} \bibinfo{person}{J.
  Leskovec}.} \bibinfo{year}{2012}\natexlab{}.
\newblock \showarticletitle{Automatic Versus Human Navigation in Information
  Networks.}. In \bibinfo{booktitle}{\emph{Proc.\ of the International
  Conference on Weblogs and Social Media, (ISWMC 2012)}}.
  \bibinfo{address}{Dublin, Ireland}.
\newblock


\bibitem[\protect\citeauthoryear{West, Pineau, and Precup}{West
  et~al\mbox{.}}{2009}]%
        {west2009wikispeedia}
\bibfield{author}{\bibinfo{person}{R. West}, \bibinfo{person}{J. Pineau}, {and}
  \bibinfo{person}{D. Precup}.} \bibinfo{year}{2009}\natexlab{}.
\newblock \showarticletitle{{Wikispeedia: An Online Game for Inferring Semantic
  Distances between Concepts}}. In \bibinfo{booktitle}{\emph{Proc.\ of the
  International Joint Conference on Artificial Intelligence (IJCAI 2009)}}.
  \bibinfo{address}{Pasadena, California, USA}, \bibinfo{pages}{1598--1603}.
\newblock


\end{thebibliography}

\end{document}